\def\BibTeX{{\rm B\kern-.05em{\sc i\kern-.025em b}\kern-.08em
    T\kern-.1667em\lower.7ex\hbox{E}\kern-.125emX}}
\newtheorem{defn}{Definition}
\newtheorem{thm}{Theorem}
\newtheorem{prop}{Proposition}
\newtheorem{cor}{Corollary}
\newtheorem{rem}{Remark}
\newtheorem{exam}{Example}
\newtheorem{assump}{Assumption}
\newtheorem{prob}{Problem}
\newcommand{\G}{\mathcal{G}}
\newcommand{\Aut}{\mathrm{Aut}}
\newcommand{\EE}{\mathbb{E}}
\newcommand{\V}{\mathbb{V}}
\newcommand{\E}{\mathcal{E}}
\newcommand{\GG}{\mathbb{G}}
\newcommand{\lcm}{\mathrm{lcm}}
\begin{document}

\title{Cluster Assignment in Multi-Agent Systems\\

\thanks{This research was supported by grant no. 2285/20 from the Israel Science Foundation}
}

\author{\IEEEauthorblockN{
Miel Sharf}
\IEEEauthorblockA{\textit{School of Electrical Engineering and Computer Science} \\
\textit{KTH Royal Institute of Technology}\\
Stockholm, Sweden \\
sharf@kth.se}
\and
\IEEEauthorblockN{
Daniel Zelazo}
\IEEEauthorblockA{\textit{Faculty of Aerospace Engineering} \\
\textit{Technion - Israel Institute of Technology}\\
Haifa, Israel \\
dzelazo@technion.ac.il}

}
\maketitle

\begin{abstract}
We study cluster assignment in multi-agent networks. We consider homogeneous diffusive networks, and focus on design of the graph that ensures the system will converge to a prescribed cluster configuration, i.e., specifying the number of clusters and agents within each cluster. Leveraging recent results from cluster synthesis, we show that it is possible to design an oriented graph such that the action of the automorphism group of the graph has orbits of predetermined sizes, guaranteeing that the network will converge to the prescribed cluster configuration. We provide upper and lower bounds on the number of edges that are needed to construct these graphs along with a constructive approach for generating these graphs. We support our analysis with some numerical examples.
\end{abstract}

\begin{IEEEkeywords}
Multi-Agent Systems, Graph Theory, Clustering, Diffusively-Coupled Systems\end{IEEEkeywords}

\section{Introduction}\label{sec.intro}

The process of reaching an agreement between agents is one of the fundamental tasks for a multi-agent system (MAS).  Indeed, agreement protocols, the decision rule implemented by each agent that enables them to distributedly reach agreement, appear across many diverse fields.  These include distributed computation \cite{Xiao2004}, robotics \cite{Chopra2006}, biochemical systems \cite{Scardovi2010}, and sensor networks \cite{OlfatiSaber2007cdc}.  A natural extension to the agreement problem is the \emph{cluster agreement} problem, which seeks to drive agents into groups.  All the agents within the same group should then reach an agreement.  The clustering problem also appears in many areas including neuroscience \cite{Schnitzler2005}, biomimicry of swarms \cite{Passino2002}, and social networks \cite{Lancichinetti2012}.

Various approaches have been used to study clustering, e.g. exploiting the structural balance of the underlying graph \cite{Altafini2013}, pinning control \cite{Qin2013}, inter-cluster nonidentical inputs \cite{Han2013}, and network optimizaton \cite{Burger2011TAC}.  Our approach to the cluster agreement problem is to leverage notions of \emph{symmetry} within a multi-agent system.  Symmetry of graphs has recently emerged as an important property for multi-agent systems, in particular in the study of controllability and observability properties of these systems \cite{Rahmani2007, Chapman2014,Chapman2015, Notarstefano2013TAC}. 

In our recent work \cite{Sharf2019b}, we introduced the notion of the \emph{weak automorphism group of an MAS}.  This new notion of symmetry for MAS combines two ideas: the automorphism group for graphs and weak equivalence of dynamical systems.  Graph automorphisms are known to capture notions of symmetries for graphs, while {weak equivalence} of dynamical systems aims to characterize similarities of heterogeneous dynamical agents in terms of their achievable steady-states. The weak automorphisms of an MAS can be thought of, therefore, as a permutation of the nodes in the underlying graph that preserves graph symmetries together with certain input-output  properties (i.e., steady-state maps) of the associated agents.  We focused on clustering for \emph{diffusively coupled networks}, as shown in Figure \ref{ClosedLoop}.  Under appropriate passivity assumptions for the agent dynamics, we showed that these diffusively coupled networks converge to a clustered solution in the steady-state, where two agents are in the same cluster if and only if there exists a weak automorphism mapping one to the other. Thus, the clustering of the MAS can be understood by studying the action of the weak automorphism group on the underlying interaction graph.

In this work we focus on homogeneous networks, that is networks where the agent dynamics are all identical, noting that the weak automorphism group of the network is identical to the automorphism group of the underlying graph in this case.  The problem we aim to solve is how to design graphs that ensure the networked system will converge to a prescribed cluster configuration, i.e., specifying the number of clusters and the number of agents within each cluster.  Employing tools from group theory, we show that it is possible to design an oriented graph such that the action of the automorphism group of the graph has orbits of specified sizes determined by the designer.  We provide upper and lower bounds on the number of edges that are needed to construct these graphs.  The proof of these results in turn provide a constructive approach for generating these graphs.  We support our analysis with some numerical examples.  

The rest of the paper is organized as  follows.  Section \ref{sec.background} reviews basic concepts related to network systems and group theory required to define a notion of symmetry for multi-agent systems.  In Section \ref{sec.cluster}, the main results related to cluster assignment are given.  This section also presents some numerical studies to demonstrate the theory. Finally, some concluding remarks are offered in Section \ref{sec.conclusion}.

\paragraph*{Notations}
This work employs basic notions from graph theory \cite{Godsil2001}.  An undirected graph $\mathcal{G}=(\mathbb{V},\mathbb{E})$ consists of a finite set of vertices $\mathbb{V}$ and edges $\mathbb{E} \subset \mathbb{V} \times \mathbb{V}$.  We denote by $k=\{i,j\} \in \mathbb{E}$ the edge that has ends $i$ and $j$ in $\mathbb{V}$. For each edge $k$, we pick an arbitrary orientation and denote $k=(i,j)$ when $i\in \mathbb{V}$ is the \emph{head} of edge $k$ and $j \in \mathbb{V}$ the \emph{tail}. A path is a sequence of distinct nodes $v_1, v_2,\ldots , v_n$ such that $\{v_i, v_{i+1}\} \in \EE$ for all i. A cycle is the union of a path $v_1,\ldots,v_n$ with the edge $\{v_1,v_n\}$. A simple cycle is a cycle whose vertices are all distinct. A graph is called connected if there is a path between any two vertices, and a graph is called a tree if it is connected but contains no simple cycles.

The incidence matrix of $\mathcal{G}$, denoted $\mathcal{E}\in\mathbb{R}^{|\mathbb{E}|\times|\mathbb{V}|}$, is defined such that for edge $k=(i,j)\in \mathbb{E}$, $[\mathcal{E}]_{ik} =+1$, $[\mathcal{E}]_{jk} =-1$, and $[\mathcal{E}]_{\ell k} =0$ for $\ell \neq i,j$. For a graph $\G$, an automorphism of $\G$ is a permutation $\psi:\V\to\V$ such that $i$ is connected to $j$ if and only if $\psi(i)$ is connected to $\psi(j)$. We denote its automorphism group by $\Aut(\G)$. Lastly, the greatest common divisor of two positive integers $m,n$ is denoted by $\gcd(m,n)$, and their least common multiple is denoted by $\lcm(m,n)$. Note that $\lcm(m,n) = \frac{mn}{\gcd(m,n)}$ always holds.
\begin{figure} [!t] 
    \centering
    \includegraphics[scale=0.7]{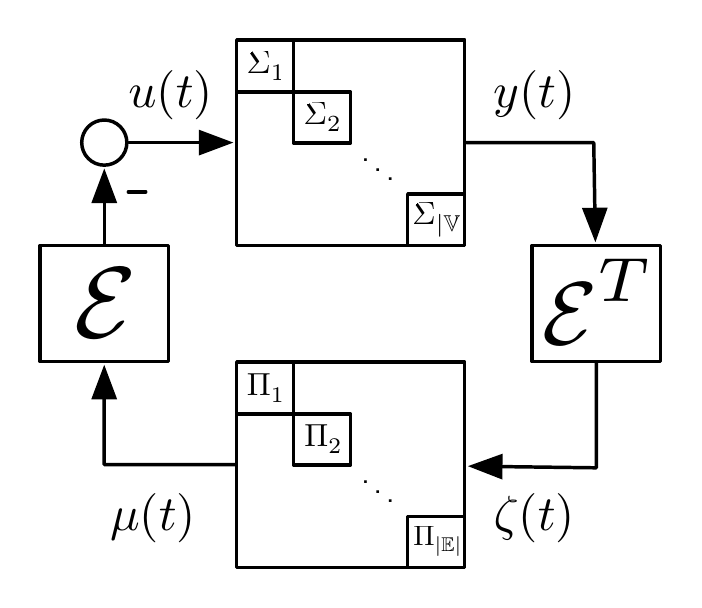}
    \vspace{-10pt}
    \caption{A diffusively coupled network.}
    \vspace{-10pt}
    \label{ClosedLoop}
\end{figure}

\section{Symmetries in Networked Systems}\label{sec.background}
In this section we provide background on the notion  of symmetries for multi-agent systems originally proposed in~\cite{Sharf2019b}. 

\subsection{Diffusively Coupled Networks}
In this subsection, we describe the structure of the network dynamical system studied in \cite{Burger2014}. 
Consider a collection of agents interacting over a network $\mathcal{G}=(\mathbb{V},\mathbb{E})$.  The nodes $i\in \V$ are assigned dynamical systems $\Sigma_i$, and the edges $e\in\mathbb{E}$ are assigned controllers $\Pi_e$, having the following form:
\begin{align} \label{Dynamics}
\Sigma_i: 
\begin{cases}
\dot{x}_i = f_i(x_i,u_i) \\
y_i = h_i(x_i,u_i),
\end{cases}
\Pi_e: 
\begin{cases}
\dot{\eta}_e= \phi_e(\eta_e,\zeta_e) \\
\mu_e = \psi_e(\eta_e,\zeta_e)
\end{cases}.
\end{align}

We consider stacked vectors of the form $u=[u_1^T,...,u_{|\mathbb{V}|}^T]^T$ and similarly for $y,\zeta$ and $\mu$.  
The network system is diffusively coupled with the controller input described by $\zeta = \mathcal{E}^Ty$, and the control input to each system by $u = -\mathcal{E}\mu$, where $\E$ is an incidence matrix of the graph $\G$.  This structure is denoted by the triplet $(\G,\Sigma,\Pi)$, and is illustrated in Fig. \ref{ClosedLoop}. In this work we focus on homogeneous networks, i.e., where all the agent dynamics and control dynamics are the same.
In \cite{Burger2014} it was shown that 
the network converges to a steady-state satisfying the interconnection constraints if the agents and controllers are (output-strictly) maximum equilibrium independent passive (MEIP).  This property can be thought of as an extension of equilibrium independent passivity developed in \cite{Hines2011}.  The details of these definitions are not essential for the development of this work, and the interested reader is referred to \cite{Burger2014, Sharf2017} for more details. 
%
%
For the rest of this paper, we will assume one of the following two alternatives.  If this is not the case, see \cite{Jain2018, Sharf2019a} for plant augmentation techniques.

\begin{assump}\label{assump.1}
The agents $\Sigma_i$ are output-striclty MEIP and the controllers $\Pi_e$ are MEIP.
\end{assump}

\begin{assump}\label{assump.2}
The agents $\Sigma_i$ are MEIP and the controllers $\Pi_e$ are output-strictly MEIP.
\end{assump}

A final technical definition is needed to characterize the  steady-states of the network.  Indeed, embedded in our passivity assumption is a requirement that each agent and controller converge to steady-state outputs given a constant input. This allows us to define  a relation between constant inputs to constant outputs that we call the \emph{steady-state input/output relation} of a system; see \cite{Burger2014}. We denote the steady-state input-output relations of the node $i$ and the edge $e$ by $k_i$ and $\gamma_e$, respectively.  For example, for agent $i$, we say that $(\mathrm{u}_i,\mathrm{y}_i)$ is a steady-state input/output pair if $\mathrm{y}_i \in k_i(\mathrm{u})$.

\begin{defn}[\hspace{0.01pt}\cite{Sharf2019b}]
Two dynamical systems $\Sigma_1$ and $\Sigma_j$ are called \emph{weakly equivalent} if their  steady-state input-output relations are identical.
\end{defn}
Examples of weakly equivalent systems are given in \cite{Sharf2019b}.

\subsection{Group Theory, Graph Automorphisms, and Symmetric MAS}
The main tool we use to facilitate clustering is symmetry, which is usually modelled using the mathematical theory of group theory \cite{Dummit1999}. There are many possible ways to define what a group is, and we choose the most concrete one:
\begin{defn}
Let $X$ be a set, and let $\GG$ be a collection of invertible functions $X\to X$. Then $\GG$ is called a group if for any $\GG \ni f,g: X\to X$, both the composite function $f\circ g$ and the inverse function $f^{-1}$ belong to $\GG$.
\end{defn}
Colloquially, the group $\GG$ defines a collection of symmetries of the set $X$. 
The action of the group $\GG$ on $X$ allows us to identify certain elements of $X$ which are symmetric. 
\begin{defn}
Let $\GG$ be a group of functions $X\to X$. We say that $x,y\in X$ are \emph{exchangeable} (under the action of $\GG$) if there is some $f\in \GG$ such that $f(x)=y$. The \emph{orbit} of $x\in X$ is the set of elements which are exchangeable with $X$.
\end{defn}
Exchangeability was considered in \cite{Sharf2019b} to identify the clustering behavior of a multi-agent system. Namely, the different clusters corresponded to the different orbits of a certain group.
\begin{prop}
Let $\GG$ be a group of functions $X\to X$. Then the set $X$ can be written as the union of disjoint orbits. 
\end{prop}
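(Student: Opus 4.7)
The plan is to recognize this as the standard fact that orbits under a group action form a partition, and to prove it via an equivalence-relation argument. Specifically, I would define a relation $\sim$ on $X$ by declaring $x \sim y$ iff $x$ and $y$ are exchangeable, i.e., iff there exists $f \in \GG$ with $f(x) = y$. The orbit of $x$ is then precisely its $\sim$-equivalence class, so once $\sim$ is shown to be an equivalence relation, the standard fact that equivalence classes partition $X$ finishes the proof.

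The bulk of the work is verifying the three properties of an equivalence relation, and here each property corresponds directly to a group axiom built into the paper's definition of $\GG$. For reflexivity, I would first note that $\GG$ is nonempty (implicit in the definition, since we are speaking of ``the group''), so fixing any $f \in \GG$, closure under inverses gives $f^{-1} \in \GG$ and closure under composition gives $f \circ f^{-1} = \Id \in \GG$; thus $\Id(x) = x$ yields $x \sim x$. For symmetry, if $f \in \GG$ satisfies $f(x) = y$, then closure under inverses gives $f^{-1} \in \GG$ with $f^{-1}(y) = x$, so $y \sim x$. For transitivity, if $f(x) = y$ and $g(y) = z$ with $f,g \in \GG$, then closure under composition gives $g \circ f \in \GG$ and $(g \circ f)(x) = z$, so $x \sim z$.

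Once $\sim$ is an equivalence relation, I would conclude by recalling that any equivalence relation partitions its underlying set into disjoint classes whose union is the whole set, and that these classes here are by definition the orbits. There is essentially no obstacle; the only point that requires care is extracting the identity element from the abstract definition of $\GG$, since the paper's definition lists only closure under composition and inversion and does not name the identity explicitly. Once that subtlety is handled, the three equivalence-relation properties are immediate translations of the defining properties of $\GG$, and the result follows.
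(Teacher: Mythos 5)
Your proof is correct and is the standard equivalence-relation argument; the paper itself states this proposition without proof, treating it as a known fact about group actions. Your care in extracting the identity element from the paper's closure-only definition of a group (via $f \circ f^{-1} = \Id$ for any $f \in \GG$, assuming nonemptiness) is exactly the right subtlety to flag, and nothing further is needed.
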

In this paper, we focus on the automorphism group $\Aut(\G)$ of an (oriented) graph $\G$. Namely, fixing an oriented graph $\G = (\V,\EE)$, a graph automoprhism is a permutation $\psi: \V \to \V$ such that if $(i,j) \in \EE$, then $(\psi(i),\psi(j))\in \EE$. In that case, we abuse the notation and say that $\Aut(\G)$ acts on $\G$ (rather than on $\V$).

Finally, we combine  the notions of automorphisms for graphs with diffusively coupled networks comprised of agents that are weakly equivalent.  

\begin{defn}\label{weakMASaut}
Let $(\G,\Sigma,\Pi)$ be any multi-agent system for SISO agents. A \emph{weak automorphism  of a MAS} is a map $\psi:\mathbb{V}\to\mathbb{V}$ such that the following conditions hold:
\begin{itemize}
\item[i)] The map $\psi$ is an automorphism of the graph $\G$.
\item[ii)] For any $i\in\mathbb{V}$, $\Sigma_i$ and $\Sigma_{\psi(i)}$ are weakly equivalent.
\item[iii)] For any $e \in\mathbb{E}$, $\Pi_e$ and $\Pi_{\psi(e)}$ are weakly equivalent.
\item[iv)] 
The map $\psi$ preserves edge orientation. 
\end{itemize}
We denote the collection of all weak automorphisms of $(\G,\Sigma,\Pi)$ by $\Aut(\G,\Sigma,\Pi)$. Naturally, this is a subgroup of the group of automorphisms $\Aut(\G)$ of the graph $\G$.
\end{defn}

\section{Cluster Assignment in MAS}\label{sec.cluster}

We now turn our attention to the problem of clustering in MAS. Specifically, we focus on the case where the agents are homogeneous, i.e. they have the exact same model. In that case, we restrict ourselves by requiring that the edge controllers \eqref{Dynamics} are also homogeneous.

In \cite{Sharf2019b} we proposed a symmetry-based framework to understand the clustering behaviour of a multi-agent system, through Definition \ref{weakMASaut}. The main result from \cite{Sharf2019b} can be summarized below.

\begin{thm}[\hspace{0.01pt}\cite{Sharf2019b}] \label{thm.Symmetry}
Consider the diffusively-coupled system $(\G,\Sigma,\Pi)$, and suppose that either Assumption \ref{assump.1} or Assumption \ref{assump.2} hold. Then for any steady-state $\mathrm y=\begin{bmatrix} \mathrm y_1 &  \cdots & \mathrm y_{|\V|}\end{bmatrix}^T$ of the closed-loop and any weak automorphism $\psi\in\Aut(\G,\Sigma,\Pi)$, it follows that $P_\psi \mathrm y = \mathrm y$, where $P_\psi$ is the permutation matrix representation  of $\psi$.
\end{thm}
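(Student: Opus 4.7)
The plan is to exploit the permutations that $\psi$ induces on both vertices and edges, show that applying them to any steady-state produces another steady-state, and then invoke uniqueness of the closed-loop steady-state output guaranteed by the MEIP framework of \cite{Burger2014}.

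First, I would fix a weak automorphism $\psi\in\Aut(\G,\Sigma,\Pi)$ and let $P:=P_\psi$ denote its vertex-permutation matrix. Condition (iv) of Definition \ref{weakMASaut} guarantees that $\psi$ sends each directed edge $(i,j)$ to the directed edge $(\psi(i),\psi(j))$ with the same orientation, hence it induces a permutation $\tilde\psi$ on $\EE$ with an associated edge-permutation matrix $Q:=Q_\psi$. The key structural identity is then
\[
P\E \;=\; \E Q,
\]
which follows immediately from the definition of the incidence matrix together with the orientation-preservation in (iv): the column of $\E$ associated with edge $e$ is merely relabelled to the column associated with $\tilde\psi(e)$, with no sign flip. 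Transposing and using that $P,Q$ are orthogonal yields the companion identity $Q\E^T=\E^T P$.

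Next, given any steady-state $(\mathrm y,\mathrm u,\mu,\zeta)$ of the closed loop, I would set $\mathrm y':=P\mathrm y$, $\mathrm u':=P\mathrm u$, $\mu':=Q\mu$, $\zeta':=Q\zeta$ and check that the permuted tuple still satisfies all four steady-state conditions. The interconnection constraints follow from the commutation identities:
\[
\zeta' \,=\, Q\E^T\mathrm y \,=\, \E^T P\mathrm y \,=\, \E^T\mathrm y', \qquad \mathrm u' \,=\, -P\E\mu \,=\, -\E Q\mu \,=\, -\E\mu'.
\]
For the nodal and edge conditions, homogeneity is crucial: since every $\Sigma_i$ shares a common steady-state relation $k$ and every $\Pi_e$ shares a common relation $\gamma$, the inclusion $\mathrm y_i\in k(\mathrm u_i)$ immediately gives $\mathrm y'_i=\mathrm y_{\psi^{-1}(i)}\in k(\mathrm u_{\psi^{-1}(i)})=k(\mathrm u'_i)$, and analogously $\mu'_e\in\gamma(\zeta'_e)$. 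Thus $(\mathrm y',\mathrm u',\mu',\zeta')$ is again a steady-state.

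Finally, I would invoke uniqueness: under either Assumption \ref{assump.1} or Assumption \ref{assump.2}, the MEIP theory of \cite{Burger2014} characterises the closed-loop steady-state output as the unique minimiser of a strictly convex network optimisation problem, the strict convexity coming from the output-strict side of the interconnection. Since both $\mathrm y$ and $\mathrm y'=P\mathrm y$ are steady-state outputs, uniqueness of the minimiser forces $P_\psi\mathrm y=\mathrm y$. The main obstacle, conceptually, is the commutation identity $P\E=\E Q$. Everything else is routine bookkeeping once that identity is in hand, but establishing it cleanly requires condition (iv) in a non-trivial way: a vertex automorphism that flipped the orientation of some edge would only yield $P\E=\E Q D$ for a signed diagonal $D$, and the symmetry of the edge relation $\gamma$ would then fail since in general $-\mu_e\notin\gamma(\zeta_e)$ when $\mu_e\in\gamma(\zeta_e)$.
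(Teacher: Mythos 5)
The paper does not actually prove Theorem~\ref{thm.Symmetry}; it imports it from \cite{Sharf2019b} without proof, so your proposal has to stand on its own. Its skeleton is the standard (and correct) one: build the edge permutation $Q$ from condition (iv), establish $P\mathcal{E}=\mathcal{E}Q$ and $Q\mathcal{E}^T=\mathcal{E}^TP$, check that the permuted tuple $(\mathrm y',\mathrm u',\mu',\zeta')$ is again a steady-state, and conclude by a uniqueness argument. Those first steps are right, and your closing remark about why orientation preservation is indispensable is also essentially correct. One small over-restriction: you lean on homogeneity (``every $\Sigma_i$ shares a common relation $k$''), but the theorem is stated for general weakly-automorphic networks; all you actually need is $k_i=k_{\psi(i)}$ and $\gamma_e=\gamma_{\tilde\psi(e)}$, which is exactly what conditions (ii)--(iii) of Definition~\ref{weakMASaut} hand you, so the argument should be phrased in terms of weak equivalence along the orbit rather than global homogeneity.

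The genuine gap is in the final uniqueness step. Under Assumption~\ref{assump.1} the node-side integral functions in the network optimization problem are strictly convex in $\mathrm y$, the steady-state output is the unique minimizer, and your conclusion follows. Under Assumption~\ref{assump.2}, however, the strict convexity sits on the edge side: only $\zeta=\mathcal{E}^T\mathrm y$ (equivalently $\mu$) is unique, while $\mathrm y$ itself need not be --- single integrators with $0\in\gamma(0)$ admit every consensus vector $\beta\mathbf 1$ as a steady-state output. In that regime your argument only shows that $P_\psi\mathrm y$ is \emph{some} steady state, not that it equals $\mathrm y$. The repair is short but must be supplied: since $P_\psi\mathrm y$ is a steady state, its edge input $\mathcal{E}^TP_\psi\mathrm y=Q\zeta$ must coincide with the unique $\zeta$, hence $\mathcal{E}^T(P_\psi\mathrm y-\mathrm y)=0$ and $P_\psi\mathrm y-\mathrm y\in\ker\mathcal{E}^T=\mathrm{span}\{\mathbf 1\}$ for a weakly connected graph; because $P_\psi$ is a permutation it preserves $\sum_i\mathrm y_i$, so the constant offset is zero and $P_\psi\mathrm y=\mathrm y$. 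Without this (or an equivalent) argument the proof does not cover half of the theorem's hypotheses.
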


This result can in fact be used to show that the system converges to a clustering configuration.  The clusters are determined by the orbits of the weak automorphism group. Here, one considers diffusively-coupled networks $(\G,\Sigma,\Pi)$ satisfying either Assumption \ref{assump.1} or \ref{assump.2}. Under this assumption, the closed-loop network is known to converge, and the invariance properties of the steady-state limit are studied. Focusing on homogeneous networks, \cite{Sharf2019b} further identified the value of $\gamma_e(0)$ as decisive. Namely, \cite{Sharf2019b} shows that if $0\in \gamma_e(0)$ for all $e\in \EE$, then the closed-loop network $(\G,\Sigma,\Pi)$ converges to consensus, and otherwise it displays a clustering behavior. Namely, for homogeneous networks, two nodes $i,j \in \V$ are in the same cluster whenever $i,j\in \V$ are exchangable under the action of $\Aut(\G)$, and the converse is almost surely true.

Although \cite{Sharf2019b} presents an extensive study toward understanding the clustering behaviour of diffusively-coupled multi-agent systems, it does not offer a synthesis procedure toward solving the clustering problem:
\begin{prob} \label{prob.cluster}
Consider a collection of $n$ homogeneous agents $\{\Sigma_i\}_{i\in \V}$, and let $r_1,\ldots,r_k$ be positive integers which sum to $n$. Find a graph $\G = (\V,\EE)$ and homogeneous edge controllers $\{\Pi_e\}_{e\in \EE}$ such that the closed loop network converges to a clustered steady-state, with a total of $k$ clusters of sizes $r_1,\ldots,r_k$.
\end{prob}

The goal of this section is use the tools of \cite{Sharf2019b} to solve Problem \ref{prob.cluster}. As described above, this can be achieved in two steps. We first make the following assumption about the controllers:
\begin{assump}\label{assump.3}
The homogeneous MEIP controllers are chosen so that $0\not\in\gamma_e(0)$ for any edge $e\in \EE$.
\end{assump}
Second, given the desired cluster sizes $r_1,\ldots,r_k$, we seek an oriented graph $\G = (\V,\EE)$ such that the action of $\Aut(\G)$ on $\G$ has orbits of sizes $r_1,\ldots,r_k$. Moreover, we desire that the oriented graph $\G$ will be weakly connected\footnote{Recall that a directed graph is weakly connected if its unoriented counterpart is connected.} to assure a flow of information throughout the corresponding diffusively-coupled network. If we find such a graph, the results of \cite{Sharf2019b} guarantee that the desired clustering behavior is achieved almost surely, so long that Assumption \ref{assump.3} holds. We make the following definition for the sake of brevity, and define the corresponding problem:
\begin{defn}
The oriented graph $\G$ is said to be of type OS$(r_1,\ldots,r_k)$ if it is weakly connected and the action of $\Aut(\G)$ on $\G$ has orbits of sizes $r_1,\ldots,r_k$.\footnote{OS stands for "orbit structure"}
\end{defn}
\begin{prob}
Given positive integers $r_1,\ldots,r_k$, determine if an oriented graph of type OS$(r_1,\ldots,r_k)$ exists, and if so, construct it.
\end{prob}

Additionally, running the system with an underlying graph $\G$ requires means to implement the corresponding interconnections. For that reason, graphs with fewer edges are more desirable. We wish to understand how many edges does a directed graph of type OS$(r_1,\ldots,r_k)$ require.  This is formalized in the next result.

\begin{thm} \label{thm.GraphClusterSynthesis}
Let $r_1,\ldots r_k$ be any positive integers, and let $n = r_1+\ldots+r_k$
\begin{itemize}
\item[i)] Any directed graph of type OS$(r_1,\ldots,r_k)$ has at least $m$ edges, where 
\begin{align}
m = \min_{\mathcal{T}\text{\emph{ tree on $k$ vertices}}} \sum_{\{i,j\}\in\mathcal{T}} \lcm(r_i,r_j).
\end{align}
\item[ii)] There exist a directed graph of type OS$(r_1,\ldots,r_k)$ with at most $M$ edges, where
\begin{align} \label{eq.M}
M = \min_{\substack{\mathcal{T}\text{\emph{ path on} }\\ \text{\emph{$k$ vertices}}}} \bigg(\sum_{\{i,j\}\in\mathcal{T}} \lcm(r_i,r_j)\bigg) + \min_{i\in \V} r_i .
\end{align}
\end{itemize} 
\end{thm}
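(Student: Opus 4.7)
The plan is to handle the two parts separately.

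For part (i), let $V_1, \ldots, V_k$ denote the orbits of $\Aut(\G)$ on $\V$, with $|V_i| = r_i$. I would first establish a key divisibility lemma: whenever $\G$ contains at least one edge with endpoints in distinct orbits $V_i, V_j$, the number of such edges is at least $\lcm(r_i, r_j)$. The proof is an orbit-stabilizer computation on the induced $\Aut(\G)$-action on the edge set: the orbit $O$ of any such edge projects $\Aut(\G)$-equivariantly and surjectively onto $V_i$ and onto $V_j$, so its fibers over each endpoint are equinumerous, forcing $r_i \mid |O|$ and $r_j \mid |O|$, hence $|O| \geq \lcm(r_i, r_j)$. Next, I form the quotient graph $Q$ on $\{1, \ldots, k\}$ by placing an edge $\{i,j\}$ whenever at least one edge of $\G$ lies between $V_i$ and $V_j$. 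Weak connectivity of $\G$ implies $Q$ is connected, so $Q$ has a spanning tree $\mathcal{T}_0$. Applying the lemma to each edge of $\mathcal{T}_0$ (and noting these edge-sets in $\G$ are disjoint across different pairs $\{i,j\}$) yields $|\EE| \geq \sum_{\{i,j\} \in \mathcal{T}_0} \lcm(r_i, r_j) \geq m$, the last inequality because $\mathcal{T}_0$ is itself a tree on $k$ vertices.

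For part (ii), I would give an explicit construction. Let $\mathcal{T}^*$ be a path on $k$ vertices minimizing the sum in \eqref{eq.M}, relabeled as $1 - 2 - \cdots - k$, and set $V_i = \{v_{i,0}, \ldots, v_{i, r_i - 1}\}$. For each consecutive pair $(i, i+1)$ in $\mathcal{T}^*$, add the $\lcm(r_i, r_{i+1})$ oriented edges $v_{i, s \bmod r_i} \to v_{i+1, s \bmod r_{i+1}}$ for $s = 0, 1, \ldots, \lcm(r_i, r_{i+1}) - 1$. Then, on the smallest orbit $V_{i_0}$, add the directed $r_{i_0}$-cycle $v_{i_0, l} \to v_{i_0, l+1 \bmod r_{i_0}}$. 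The total number of edges is exactly $M$, and weak connectivity follows from the path.

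The main obstacle is verifying that $\Aut(\G)$ has orbits exactly $V_1, \ldots, V_k$. For the direction that each $V_i$ is contained in one orbit, I would exhibit the simultaneous cyclic shift $\sigma: v_{i, l} \mapsto v_{i, l+1 \bmod r_i}$ and verify it is an automorphism: between-orbit edges correspond to the Chinese Remainder Theorem condition $a \equiv b \pmod{\gcd(r_i, r_{i+1})}$, which is preserved when both $a$ and $b$ are incremented, and the cycle on $V_{i_0}$ is clearly preserved. Since $\langle \sigma \rangle$ is transitive on each $V_i$, every $\Aut(\G)$-orbit contains some $V_i$ entirely. The subtler direction is that each $V_i$ is $\Aut(\G)$-invariant. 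The key observation is that $V_{i_0}$ is intrinsically distinguished as the set of vertices lying on a directed cycle---the between-orbit edges form a DAG oriented along $\mathcal{T}^*$, so the only directed cycles in $\G$ are sub-cycles of the $r_{i_0}$-cycle. Hence $\psi(V_{i_0}) = V_{i_0}$ for every $\psi \in \Aut(\G)$. Then $V_{i_0 + 1}$ is characterized as the out-neighbors of $V_{i_0}$ outside $V_{i_0}$ and $V_{i_0 - 1}$ as the in-neighbors; inducting along $\mathcal{T}^*$ in both directions yields $\psi(V_i) = V_i$ for all $i$, completing the argument.
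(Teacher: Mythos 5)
Your proposal follows the same two-step structure as the paper's proof: the same divisibility lemma for inter-orbit edge counts combined with a spanning tree of the quotient (condensed) graph for the lower bound, and the same modular-arithmetic path construction with an added cycle on the smallest orbit for the upper bound. Your proof of the lemma differs harmlessly from the paper's: the paper counts degrees (every vertex of $V_i$ has the same $\G_{ij}$-degree $d_i$, so $r_id_i=r_jd_j$ forces $d_i\ge r_j/\gcd(r_i,r_j)$), whereas you bound the size of a single edge orbit via equivariant projections; both are valid, and yours is arguably cleaner. Your verification that the cyclic shift is an automorphism via the congruence $a\equiv b\ (\mathrm{mod}\ \gcd(r_i,r_{i+1}))$ is also correct and more explicit than the paper's.

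There is, however, a gap in your invariance argument for part (ii) in the case $\min_i r_i=1$. A ``directed $1$-cycle'' is a self-loop, which the diffusive-coupling model does not admit and which the paper's construction explicitly omits (the cycle on $V_{i^\star}$ is added only when $r_{i^\star}>1$). In that case the graph contains no directed cycles at all, so your anchor ``$V_{i_0}$ is the set of vertices lying on a directed cycle'' identifies the empty set and the induction along the path cannot start. This is not a vacuous corner case: the paper's own second numerical example uses $r_1=1$. The repair is exactly the paper's anchor: all inter-orbit edges point forward along the path and the optional intra-orbit cycle sits on $V_{i}$ with $i\neq k$ (WLOG), so the vertices of $V_k$ are precisely those of out-degree $0$; hence $V_k$ is invariant, $V_{k-1}$ is recovered as the set of in-neighbors of $V_k$, and one inducts backward along the path. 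With that substitution (or by treating $\min_i r_i=1$ as a separate case), your argument is complete.
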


\begin{proof}
We start with the former claim. Consider a graph $\G$ of type OS($r_1,\ldots,r_k$). Let $V_1,\cdots,V_k$ be the orbits of $\Aut(\G)$ in $\G$, corresponding to the different clusters. For any two indices $i,j\in\{1,\cdots,k\}$, we consider the induced bi-partite subgraph $\G_{ij}$ on the vertices $V_i\cup V_j$.\footnote{In other words, only edges between $V_i$ and $V_j$ exist in the subgraph.} We claim that if it is not empty, then $\G_{ij}$ has at least $\lcm(r_i,r_j)$ edges.

Indeed, $V_j$ is invariant to $\Aut(\G)$, hence, for any $x\in V_i$ and any $\psi \in \Aut(\G)$, $x$ and $\psi(x)$ have the same number of edges from them to $V_j$. Thus, all vertices in $V_i$ have the same $\G_{ij}$-degree. Similarly, all vertices in $V_j$ have the same $\G_{ij}$-degree. Let $d_i$ be the $\G_{ij}$-degree of vertices in $V_i$, and $d_j$ be the $\G_{ij}$-degree of vertices in $V_j$. As the edges in $\G_{ij}$ only link $V_i$ and $V_j$, the number of edges in $\G_{ij}$ is equal to $r_i d_i = r_j d_j$. In particular, the number $r_j$ divides $r_i d_i$, and the number $\frac{r_j}{\gcd(r_i,r_j)}$ divides $\frac{r_i}{\gcd(r_i,r_j)} d_i$. However, the numbers $\frac{r_j}{\gcd(r_i,r_j)},\frac{r_i}{\gcd(r_i,r_j)}$ are relatively prime, meaning that $\frac{r_j}{\gcd(r_i,r_j)}$ must divide $d_i$. In particular, $d_i \ge \frac{r_j}{\gcd(r_i,r_j)}$, and $\G_{ij}$ has at least $r_i d_i \ge \frac{r_i r_j}{\gcd(r_i,r_j)} = \lcm(r_i,r_j)$ edges.

Now, consider the condensed graph $\G^{\prime}$. The vertices of $\G^{\prime}$ are $\{v_1,\cdots,v_k\}$, and $v_i$ is connected to $v_j$ if and only if there is an edge between $V_i$ and $V_j$. As $\G$ is weakly connected, $\G^\prime$ is also connected. Let $\mathcal{T}$ be a spanning tree for $\G^\prime$. If there is an edge $e=(i,j)$ in $\mathcal{T}$, $V_i$ and $V_j$ are linked, meaning that the graph $\G_{ij}$ contains at least $\lcm(r_i,r_j)$ edges. As a result, $\G$ has at least $ \sum_{\{i,j\}\in\mathcal{T}} \lcm(r_i,r_j) \ge m$ edges.

We now consider the second part of the theorem. Let $\mathcal{T}$ be a path on $k$ vertices, and let $i_1,\cdots,i_k$ be the order of the vertices in the path. By reordering $r_1,\cdots,r_k$, we may assume that $i_j = j$ for $j=1,2,\cdots,k$. Let $i$ be the vertex at which $r_i$ is minimized. For each $j\in \{1,\cdots,k\}$, we number the vertices in $V_j$ as ${v_1^j,\cdots,v_{r_j}^j}$. Build the graph $\G$ as follows:
\begin{itemize}
\item[I)] For every $j$, add the following edges between $V_j$ and $V_{j+1}$: For $p=1,\ldots, \lcm(r_j,r_{j+1})$, connect $v^j_{p\mod r_j}$ to $v^{j+1}_{p\mod r_{j+1}}$.
\item[II)] If $r_i > 1$, then for each $p=1,\ldots,r_i$, connect $v^i_p$ to $v^i_{(p+1)\mod r_i}$.
\end{itemize}
The number of edges in $\G$ is equal to $\sum_{\{i,j\}\in\mathcal{T}} \lcm(r_i,r_j) + r_i$ if $r_i \ge 2$, and equal to $\sum_{\{i,j\}\in\mathcal{T}} \lcm(r_i,r_j)$ otherwise. We show that the orbits of the action of $\Aut(\G)$ on $\G$ are given by $V_1,\cdots, V_k$, and that $G$ is weakly connected, concluding the proof.

First, we show $\G$ is weakly connected. We note that the induced subgraph on $V_i$ is weakly connected. Indeed, this is clear if $r_i =1$, and in the case $r_i \ge 2$, the following cycle eventually passes through all the nodes in $V_i$:
$$
v^i_1 \to v^i_2 \to v^i_3 \to \cdots
$$
Now, by the construction of $\G$, any vertex $v^i_p$ is connected all vertices $v^{l}_{p \mod r_l}$. Thus, for any two arbitrary vertices $v^{j_1}_{p_1}$ and $v^{j_2}_{p_2}$, we can consider a path that starts from $v^{j_1}_{p_1}$, goes to $v^{i}_{p_1\mod r_i}$, moves to $v^{i}_{p_2\mod r_{i}}$ using the connectivity of induced subgraph on $V_i$, and continues to $v^{j_2}_{p_2}$. Thus $\G$ is weakly connected.

As for the orbits, we first consider the map $\psi$ defined on the vertices of $\G$ by sending each vertex $v^j_p$ to $v^j_{(p+1)\mod r_j}$. By definition of the graph $\G$, it is clear that $\psi$ is a graph automorphism. Moreover, repeatedly applying $\psi$ can move $v^j_p$ to any vertex in $V_j$. Thus the orbit of $v^j_p$ contains the set $V_j$.

Conversely, we show that each $V_j$ is invariant under $\Aut(\G)$, so the orbits of the action of $\Aut(\G)$ on $\G$ are exactly $V_1,\cdots V_k$. This is obvious if $k=1$, as then $V_1 = \V$, so we assume that $k\ge 2$. In that case, either $i\neq 1$ or $i\neq k$ (or both). We assume $i\neq k$, as the complementary case can be treated similarly. Graph automorphisms preserve all graph properties, and in particular, they preserve the out-degree of vertices. As all edges are oriented from $V_j$ to $V_{j+1}$ or from $V_i$ to itself where $i\neq k$, the vertices in $V_k$ have an out-degree of $0$, and they are the only ones with this property. Thus $V_k$ must be invariant under $\Aut(\G)$. The only vertices with edges to $V_k$ are in $V_{k-1}$, meaning that $V_{k-1}$ is also invariant under $\Aut(\G)$. Iterating this argument, we conclude that $V_1,\cdots V_k$ must all be invariant under the action of $\Aut(\G)$. Thus they are the orbits of the action, and the action of $\Aut(\G)$ on $\G$ has orbits of size $r_1,\cdots,r_k$. This completes the proof.
\end{proof}
The construction of OS-type graphs presented in the proof is explicitly stated in Algorithm \ref{alg.BuildingGraphs}. The Algorithm takes as input the desired cluster sizes, as well as a path $\mathcal{T}$ describing the desired interconnection topology of the different clusters in the graph. While any path gives a graph of type $OS(r_1,\ldots,r_k)$, the number of edges in the graph depends on the path $\mathcal{T}$.
\begin{algorithm} [h]
\caption{Building $OS$-type graphs}
\label{alg.BuildingGraphs}
{\bf Input:} A collection $r_1,\ldots,r_k$ of positive integers summing to $n$, and a path $\mathcal{T}$ on $k$ vertices.\\
{\bf Output:} A graph $\G$ of type $OS(r_1,\ldots,r_k)$.
\begin{algorithmic}[1]
\State Let $\G = (\V,\E)$ be an empty graph.
\For{$j=1,\ldots,k$ and $p=1,\ldots,r_j$}
\State Add a node with label $v^j_p$ to $\V$.
\EndFor
\For{any edge $\{i,j\}$ in $\mathcal{T}$} 
\For{$p=1,\ldots,\lcm(r_i,r_j)$}
\State Add the edge $v^i_{p \mod r_i} \to v^j_{p \mod r_j}$ to $\EE$.
\EndFor
\EndFor
\State Compute $i^\star = \arg\min\{r_i\}$. If $r_{i^\star} = 1$, go to step 10.
\For{$p=1,\ldots,r_{i^\star}$}
\State Add the edge $v_p^{i^\star} \to v_{(p+1) \mod r_{i^\star}}^{i^\star}$ to $\EE$.
\EndFor
\State {\bf Return} $\G = (\V,\EE)$.
\end{algorithmic}
\end{algorithm}

\begin{rem}
One might ask why the lower bound considers all possible trees, while the upper bound only considers path graphs. The main reason for this distinction can be seen in the proof - one can build a general graph $\G$ using a tree graph as a basis, instead of the path graph $1\to 2 \to 3 \to\cdots\to k$. In that case, proving that the orbit of $v^j_p$ contains $V_j$ is easy, but it is possible that the orbit is actually larger than $V_j$.
\end{rem}

\begin{rem}
The lower bound in Theorem \ref{thm.GraphClusterSynthesis} can be found using any algorithm finding a minimal spanning tree, e.g. Kruskal's algorithm which runs in polynomial time \cite{Cormen2009}. However, the upper bound requires one to solve a variant of the traveling salesman problem, which is known to be {\bf NP}-hard \cite{Cormen2009}. 
\end{rem}

Theorem \ref{thm.GraphClusterSynthesis} deals with a general cluster assignment by explicitly constructing a graph solving the problem. We apply it to more specific cases in order to achieve concrete bounds on the number of edges needed for clustering in these cases.

\begin{cor}
Suppose that all cluster sizes $r_1,\cdots,r_k$ are equal, and bigger than $1$. Then there exists a graph of type OS($r_1,\ldots,r_k$) with at most $n = r_1+\ldots+r_k$ edges.
\end{cor}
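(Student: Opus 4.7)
The plan is to apply the upper bound of Theorem \ref{thm.GraphClusterSynthesis}(ii) directly, since the hypothesis that all cluster sizes equal a common value $r > 1$ collapses the minimization in \eqref{eq.M} to a closed-form expression.

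Concretely, writing $r_1 = r_2 = \cdots = r_k = r$, I would first observe that for any pair of clusters $\{i,j\}$ one has $\lcm(r_i, r_j) = \lcm(r,r) = r$. Consequently, for any path $\mathcal{T}$ on $k$ vertices, which necessarily has exactly $k-1$ edges, the sum $\sum_{\{i,j\} \in \mathcal{T}} \lcm(r_i,r_j) = (k-1)r$ is independent of the choice of $\mathcal{T}$. Next, since all cluster sizes coincide, $\min_{i \in \V} r_i = r$. Plugging both into \eqref{eq.M} gives
\begin{equation*}
M = (k-1)r + r = kr = n.
\end{equation*}
Hence Theorem \ref{thm.GraphClusterSynthesis}(ii) guarantees a graph of type OS$(r_1,\ldots,r_k)$ with at most $n$ edges, which is the desired conclusion.

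There is essentially no obstacle here: the assumption $r > 1$ is precisely what is needed to ensure the additive term $\min_i r_i = r$ in \eqref{eq.M} is actually required (cf.\ step~9 of Algorithm \ref{alg.BuildingGraphs}, where the intra-cluster cycle is added only when $r_{i^\star} \ge 2$). If one wishes to be fully explicit, one may even give the constructed graph: take $\mathcal{T}$ to be any path, say $1 \to 2 \to \cdots \to k$, and run Algorithm \ref{alg.BuildingGraphs}; the resulting graph consists of $k-1$ matchings of size $r$ between consecutive clusters together with a single oriented $r$-cycle inside one cluster, for a total of exactly $n$ edges.
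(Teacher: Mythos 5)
Your proposal is correct and follows essentially the same route as the paper: both instantiate the path construction of Theorem \ref{thm.GraphClusterSynthesis}(ii), use $\lcm(r,r)=r$ to get $(k-1)r$ inter-cluster edges plus the $r$-cycle, and conclude $M = kr = n$. The paper's proof additionally argues that no OS$(r,\ldots,r)$ graph with fewer than $n$ edges exists (via a divisibility-of-out-degrees argument), but that optimality claim goes beyond what the corollary as stated requires, so its omission is not a gap.
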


\begin{proof}
Let $r$ be the size of all clusters. We note that in this case, $\lcm(r,r) = r$, and the number of clusters is $k=n/r$. Thus, if we build a graph $\G$ as in the proof of Theorem \ref{thm.GraphClusterSynthesis}, then $\G$ has exactly $n$ edges, as the summation over the edges has $k-1$ elements. It remains to show that no such graph with less than $n$ edges exists.

First, any graph on $n$ nodes and less than $n-1$ edges is not weakly connected \cite{Bondy1976}. Thus, it suffices to show that no such graph with $n-1$ edges exists. Indeed, let $d_i$ be the out-degree of any vertex in the $i$-th orbit, $i=1,\cdots,k$. As the action of $\Aut(\G)$ preserves the out-degree, it is the same for all vertices in the $i$-th orbit. The number of edges is the sum of the out-degree over all vertices, but it's equal to $rd_1 + \cdots + rd_k$. Thus the number of edges in $\G$ must be divisible by $r$. But $n = kr$, meaning that $n-1$ is not divisible by $r$ unless $r=1$. Thus there is no such graph on $n-1$ edges.
\end{proof}
\begin{cor}
Let $r_1,\cdots,r_k$ be positive integers such that $k\ge 2$ and that for every $j,l$, either $r_j$ divides $r_l$ or vice versa. Then there exists a graph of type OS($r_1,\ldots,r_k$) with $n = r_1+\ldots+r_k$ edges.
\end{cor}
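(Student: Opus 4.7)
The plan is to invoke part~(ii) of Theorem~\ref{thm.GraphClusterSynthesis} with a judiciously chosen path, and to verify that the corresponding value of $M$ in \eqref{eq.M} is at most $n$.

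Since every pair $r_j, r_l$ is comparable under divisibility by hypothesis, divisibility induces a total order on the multiset $\{r_1,\ldots,r_k\}$. After reindexing, I may assume $r_1 \mid r_2 \mid \cdots \mid r_k$; in particular $r_1 = \min_i r_i$. I then take $\mathcal{T}$ to be the path with edges $\{1,2\},\{2,3\},\ldots,\{k-1,k\}$. For each such edge, the divisibility $r_j \mid r_{j+1}$ forces $\lcm(r_j, r_{j+1}) = r_{j+1}$. Substituting this choice of $\mathcal{T}$ into \eqref{eq.M} and using $\min_i r_i = r_1$, I obtain
\begin{equation*}
M \;\le\; \sum_{j=1}^{k-1} \lcm(r_j, r_{j+1}) + \min_i r_i \;=\; \sum_{j=2}^{k} r_j + r_1 \;=\; n.
\end{equation*}
Theorem~\ref{thm.GraphClusterSynthesis}(ii) then delivers a graph of type OS$(r_1,\ldots,r_k)$ with at most $n$ edges, proving the corollary.

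There is no genuine obstacle here: the substantive observation is that along a monotone divisibility chain, the $\lcm$ terms in \eqref{eq.M} telescope into the partial sum $r_2+\cdots+r_k$, which combines with the additive $\min_i r_i$ term to recover the total $n$. One minor point worth flagging is that the hypothesis $k \ge 2$ is only used to guarantee $\mathcal{T}$ is nontrivial; the argument degenerates harmlessly when $k=1$, producing a single weakly-connected cycle on $r_1 = n$ vertices.
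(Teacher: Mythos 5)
Your proof is correct and follows essentially the same route as the paper's: reorder the $r_j$ into a divisibility chain, take the path $1\to 2\to\cdots\to k$, and use $\lcm(r_j,r_{j+1})=r_{j+1}$ to telescope the sum in \eqref{eq.M} down to $n$. The only cosmetic difference is that you conclude ``at most $n$ edges'' via the bound $M\le n$, whereas the paper counts the edges of the explicit construction directly; both yield the claim.
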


\begin{proof}
We reorder the numbers $r_1,\cdots, r_k$ such that $r_l$ divides $r_j$ for $l\le j$. We note that if $r_l$ divides $r_j$, then $\lcm(r_l,r_j) = r_l$. Thus, if we let $\G$ be the graph built in the proof of Theorem \ref{thm.GraphClusterSynthesis}, it has the following number of edges:
$$
\sum_{j = 1}^{k-1} \lcm(r_j,r_{j+1}) + r_1 = \sum_{j=1}^{k-1} r_{j+1} + r_1=\sum_{j=1} r_j = n.
$$
\end{proof}

\begin{cor}
Let $r_1,\cdots,r_k$ be positive integers such that $r_j \le q$ for all $j$, and let $n = r_1+\cdots+r_k$. Then there exists a graph of type OS($r_1,\ldots,r_k$) with at most $n+O(q^3)$ edges.
\end{cor}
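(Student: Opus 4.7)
The plan is to apply Theorem \ref{thm.GraphClusterSynthesis}(ii) with a judiciously chosen path $\mathcal{T}$ on the $k$ clusters, rather than optimizing over all paths. The key observation is that since $r_j \le q$, there are at most $q$ distinct values among $r_1,\ldots,r_k$. Letting $n_s$ denote the number of indices $j$ with $r_j = s$, I would order the clusters along $\mathcal{T}$ so that all those of common size $s$ appear consecutively, for each $s$.

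With this ordering the sum $\sum_{\{i,j\}\in\mathcal{T}}\lcm(r_i,r_j)$ decomposes naturally. Inside a block of $n_s$ consecutive clusters of common size $s$, each of the $n_s-1$ path edges contributes $\lcm(s,s)=s$, so the total intra-block contribution equals $\sum_{s:n_s\ge 1}(n_s-1)s = n - \sum_{s:n_s\ge 1}s$. At the boundary between two adjacent blocks of sizes $s$ and $s'$ we pay $\lcm(s,s') \le ss' \le q^2$, and since at most $q$ blocks are nonempty there are at most $q-1$ such boundaries. Adding the extra $\min_i r_i \le q$ term from \eqref{eq.M} yields a total of at most
\begin{equation*}
n - \sum_{s:n_s\ge 1} s + (q-1)q^2 + q \;\le\; n + q^3,
\end{equation*}
which is the claimed $n + O(q^3)$ bound.

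The only point to verify is that Theorem \ref{thm.GraphClusterSynthesis}(ii) applies for the particular ordering I chose, but that theorem is already stated for an arbitrary path on $k$ vertices, so any relabelling of the clusters is admissible. Consequently there is no genuine obstacle: the corollary is essentially a counting argument that exploits the fact that same-size clusters can be chained at linear cost in $n$, while the at most $q-1$ cross-size boundary edges and the residual $\min_i r_i$ term are all absorbed into the $O(q^3)$ remainder.
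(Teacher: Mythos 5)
Your proposal is correct and follows essentially the same route as the paper: sort the clusters by size along the path, observe that consecutive equal-size clusters contribute $\lcm(s,s)=s$ per edge (totalling at most $n$), and absorb the at most $q-1$ cross-size boundary edges (each at most $q^2$) together with the residual $\min_i r_i\le q$ into the $O(q^3)$ term. The only cosmetic difference is that the paper bounds the boundary contribution by $\sum_{l=1}^{q-1}l(l-1)\le\frac{(q-1)q(2q-1)}{6}$ rather than your cruder $(q-1)q^2$, which changes nothing asymptotically.
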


\begin{proof}
By reordering, we may assume that that $r_1\le r_2\le \cdots r_k$. Let $m_l$ be the number of clusters of size $s$ for $l=1,2,\cdots,q$. As before, consider the graph $\G$ built in the proof of Theorem \ref{thm.GraphClusterSynthesis}. If $r_j=r_{j+1}$ then $\lcm(r_j,r_{j+1}) = r_l$, and $\lcm(r_j,r_{j+1}) \le r_j r_{j+1}$ otherwise. Thus, the number of edges in $\G$ is given by:
\begin{align*}
\sum_{j =1}^{k-1} \lcm(r_j,r_{j+1}) +r_1 \le  \hspace{-7pt}\sum_{\substack{l\in\{1,\ldots,q\},\\ m_l \neq 0}} \hspace{-7pt}(m_l-1)l + \sum_{l=1}^{q-1} l(l-1) + r_1.
\end{align*}
Indeed, for each $l\in\{1,\ldots,q\}$, if there's at least one cluster of size $l$, then there are $m_l -1$ edges in the path $\mathcal{T} = 1\to 2 \to \cdots \to k$ that touch two clusters of size $l$. The second term bounds the number of edges that appear between clusters of different sizes. We note that $n = \sum_{l=1}^q l m_l$, so the first term is bounded by $n$. As for the second term, we can bound $l(l-1)$ by $l^2$ and then use the formula $\sum_{l=1}^{q-1} l^2 = \frac{(q-1)q(2q-1)}{6}$, dating back to Fibonacci's \emph{Liber Abaci} from 1201 \cite{Sigler2003}. Lastly, the last term $r_1$ is bounded by $q$. This completes the proof.
\end{proof}

We now give examples of graphs constructed by Algorithm \ref{alg.BuildingGraphs}, and show they indeed force a clustering structure on the agents.
\begin{exam}\label{SYMExam.ClusteredGraphSynthesis}
We consider a collection of identical agents, all of the form $\dot{x} = -x+u+\alpha ,y=x$ where $\alpha$ is a log-uniform random variable between $0.1$ and $1$, identical for all agents. In all experiments described below, we considered identical controllers, equal to the static nonlinearity of the form
$$
\mu = a_1 + a_2 (\zeta + \cos(\zeta)),
$$
where $a_1$ was chosen as a Gaussian random variable with mean $0$ and standard deviation $10$, and $a_2$ was chosen as a log-uniform random variable between $0.1$ and $10$. We note that the agents are indeed output-strictly MEIP and the controllers are MEIP. Moreover, the network is homogeneous, so $\Aut(\G,\Sigma,\Pi) = \Aut(\G)$. Thus, we can use the graphs constructed by Theorem \ref{thm.GraphClusterSynthesis}, as depicted in Algorithm \ref{alg.BuildingGraphs}, to force a clustering behavior.

We first consider a network of $n=15$ agents and tackle the cluster synthesis problem with five equally-sized clusters, i.e., $r_1=r_2=r_3=r_4=r_5=3$. One possible graph forcing these clusters, as constructed by Algorithm  \ref{alg.BuildingGraphs} for a path $\mathcal{T}$ of the form $1\to 2\to 3\to 4\to 5$, can be seen in Figure \ref{fig.ClusterSynthesisTheoremGraph1}, along with the agents' trajectories for the closed-loop system. 

Second, we consider a network of $n=10$ agents with desired cluster sizes $r_1=1,r_2=2,r_3=3,r_4=4$. We build a graph forcing these cluster sizes by using Algorithm  \ref{alg.BuildingGraphs} for a path $\mathcal{T}$ of the form $4\to 2\to 1\to 3$, which is the minimizer in \eqref{eq.M}. The graph can be seen in Figure \ref{fig.ClusterSynthesisTheoremGraph1}, along with the agents' trajectories for the closed-loop system. 
 \begin{figure}[t!]
 \begin{center}
 	\hfill\subfigure[Graph forcing cluster sizes $r_1=r_2=r_3=r_4=r_5=3$. Nodes with the same color will be in the same cluster.] {\scalebox{.80}{\includegraphics{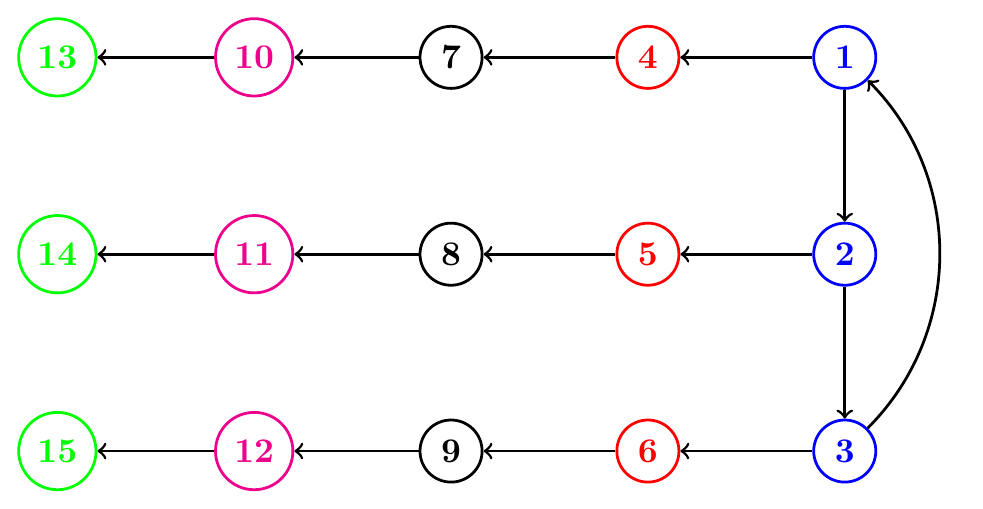}}}\hfill \vspace{0.2cm}
	\subfigure[Agent's trajectories for the closed-loop system. Colors correspond to node colors in the graph.] {\scalebox{.5}{\includegraphics{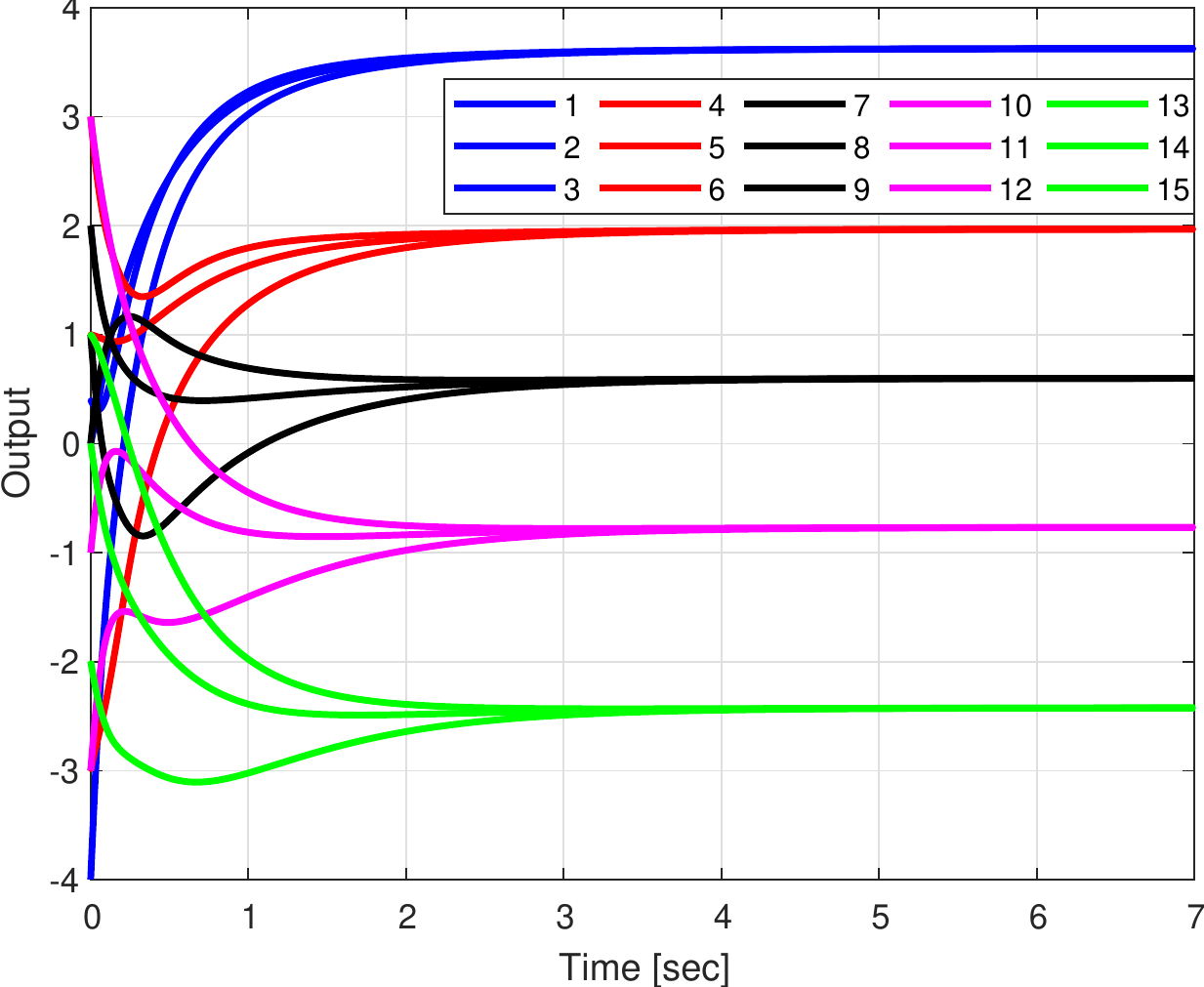}}}
   \caption{First example of graphs solving the cluster synthesis problem, achieved by running Algorithm \ref{alg.BuildingGraphs}.}
	\label{fig.ClusterSynthesisTheoremGraph1}
	\vspace{-15pt}
 \end{center}
 \end{figure}
 \begin{figure}[t!]
 \begin{center}
 	\hfill\subfigure[Graph forcing cluster sizes $r_1=1,r_2=2,r_3=3,r_4=4$. Nodes with the same color will be in the same cluster.] {\scalebox{.80}{\includegraphics{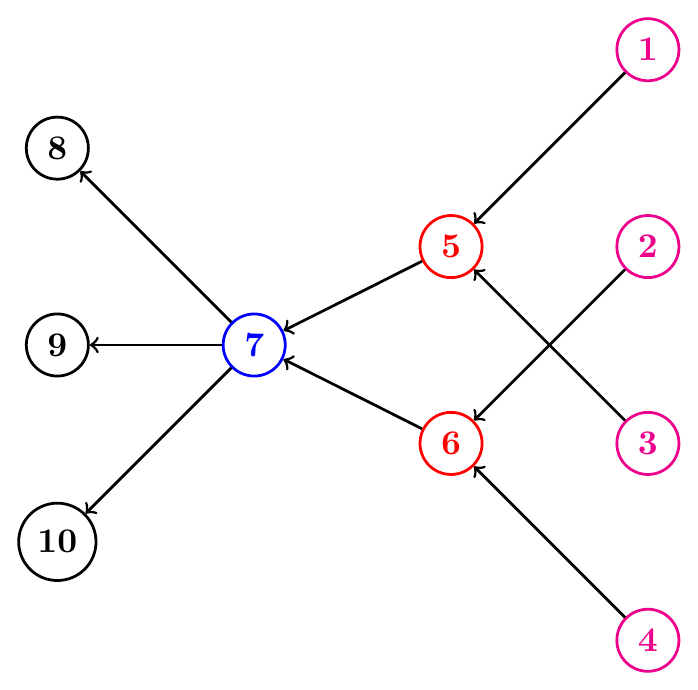}}}\hfill \vspace{0.2cm}
	\subfigure[Agent's trajectories for the closed-loop system. Colors correspond to node colors in the graph.] {\scalebox{.5}{\includegraphics{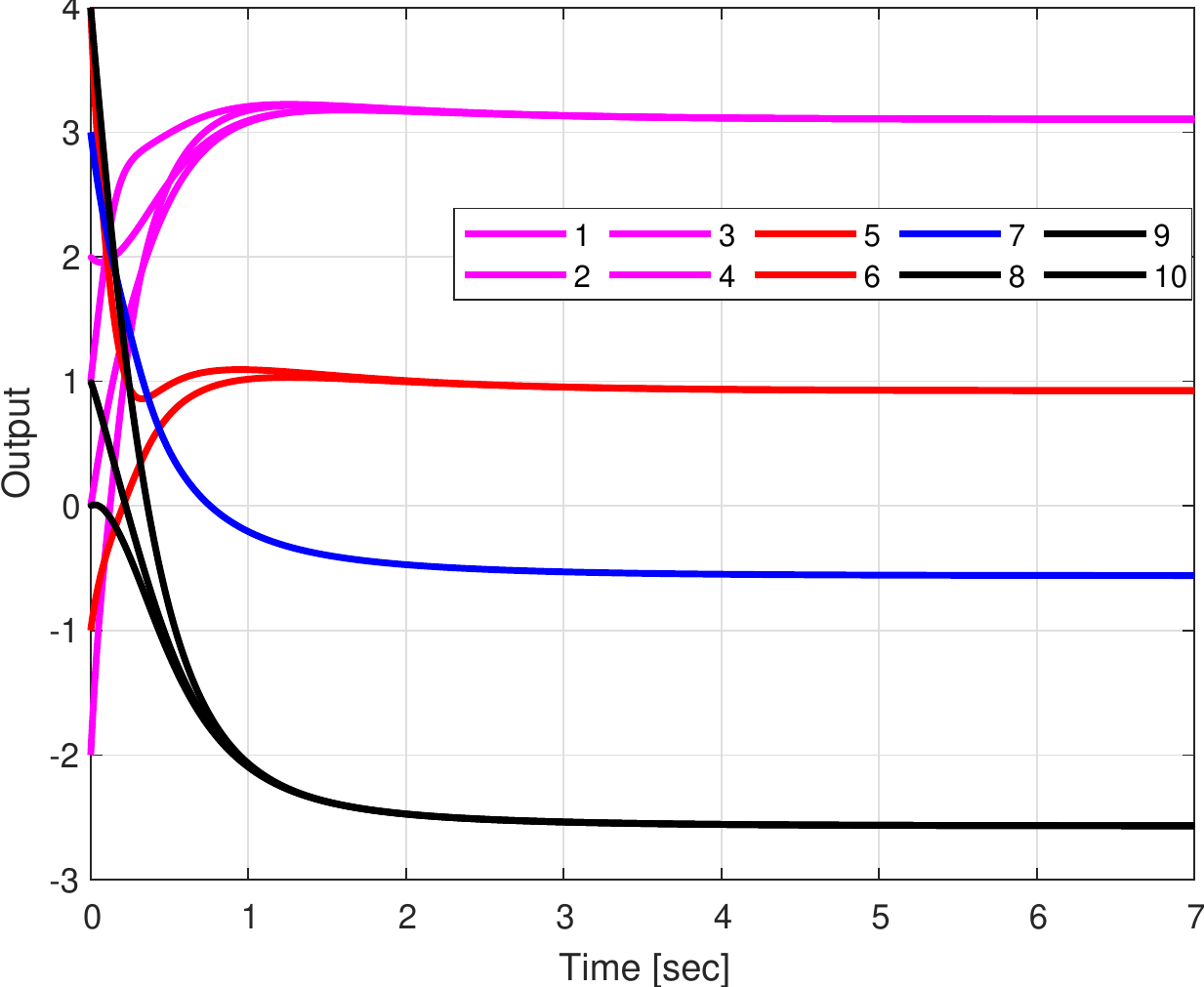}}}
   \caption{First example of graphs solving the cluster synthesis problem, achieved by running Algorithm \ref{alg.BuildingGraphs}.}
	\label{fig.ClusterSynthesisTheoremGraph2}
	\vspace{-15pt}
 \end{center}
 \end{figure}

\end{exam}

\section{Conclusions}\label{sec.conclusion}

In this work we explored the problem of cluster assignment for homogeneous multi-agent systems.  We extended the results of \cite{Sharf2019b} to provide a characterization of graphs with an automorphism group containing orbits of a prescribed size.  When such graphs are used in a network comprised of weakly equivalent agent and controller dynamics, the network converges to a cluster configuration.


\bibliographystyle{IEEEtran}
\bibliography{main}

\end{document}